\def\BibTeX{{\rm B\kern-.05em{\sc i\kern-.025em b}\kern-.08em
    T\kern-.1667em\lower.7ex\hbox{E}\kern-.125emX}}
\newtheorem{theorem}{Theorem}
\newtheorem{property}{Property}
\lstdefinelanguage{Solidity}{
  morekeywords={pragma,solidity,contract,struct,mapping,address,uint,uint8,uint64,uint256,string,bytes,bytes32,bool,view,returns,public,private,modifier,event,emit,constructor,require,if,revert,memory,storage,external,immutable,internal,abs},
  sensitive=true,
  morecomment=[l]{//},
  morestring=[b]"
}
\title{\Large\bfseries A Patient-Centric Blockchain Framework for Secure Electronic Health Record Management: Decoupling Data Storage from Access Control}
\author[1]{Tanzim Hossain Romel}
\author[1]{Kawshik Kumar Paul}
\author[1]{Tanberul Islam Ruhan}
\author[1]{Maisha Rahman Mim}
\author[1]{Abu Sayed Md. Latiful Hoque}
\affil[1]{Department of Computer Science \& Engineering, Bangladesh University of Engineering \& Technology, Dhaka, Bangladesh}
\date{}
\begin{document}
\maketitle

\begin{abstract}
We present a patient-centric architecture for electronic health record (EHR) sharing that separates content storage from authorization and audit. Encrypted FHIR resources are stored off-chain; a public blockchain records only cryptographic commitments and patient-signed, time-bounded permissions using EIP-712. Keys are distributed via public-key wrapping, enabling storage providers to remain honest-but-curious without risking confidentiality. We formalize security goals (confidentiality, integrity, cryptographically attributable authorization, and auditability of authorization events) and provide a Solidity reference implementation deployed as single-patient contracts. On-chain costs for permission grants average \textbf{78,000 gas} (L1), and end-to-end access latency for 1 MB records is \textbf{0.7--1.4 s} (mean values for S3 and IPFS respectively), dominated by storage retrieval. Layer-2 deployment reduces gas usage by 10--13$\times$, though data availability charges dominate actual costs. We discuss metadata privacy, key registry requirements, and regulatory considerations (HIPAA/GDPR), demonstrating a practical route to restoring patient control while preserving security properties required for sensitive clinical data.
\end{abstract}

\noindent\textbf{Keywords:} Blockchain, Electronic Health Records, Access Control, Healthcare Privacy, Smart Contracts, FHIR, Cryptographic Protocols

\section{Introduction}

The digitization of healthcare has transformed medical practice, enabling evidence-based decision-making and population health management at unprecedented scales. However, health records remain trapped in organizational silos with incompatible systems. When patients seek care from multiple providers or relocate, critical medical history becomes inaccessible, leading to duplicated tests, adverse drug interactions, and suboptimal treatment decisions.

\subsection{The Centralization Problem}

Contemporary health information exchange architectures exhibit three fundamental weaknesses. First, they create single points of failure where system compromise can affect millions of patient records. Major healthcare data breaches have exposed the medical information of hundreds of millions of individuals~\cite{vest2010}. Second, centralized systems require patients to trust intermediary organizations with unfettered access. While policies constrain behavior, insider threats persist, and audit logs maintained by audited entities offer limited assurance. Third, patients exercise minimal control over sharing, contradicting principles of autonomy and informed consent.

\subsection{Blockchain as an Architectural Primitive}

Blockchain technology addresses specific weaknesses through replicated, append-only ledgers where transactions are cryptographically verified rather than institutionally authorized. However, naive blockchain application introduces problems: storing protected health information on public blockchains violates privacy through transparency and immutability, and transaction costs make large document storage impractical.

The key insight is architectural separation: encrypted records reside in off-chain storage optimized for large objects, while blockchain serves exclusively as authorization layer and integrity mechanism. This exploits complementary strengths while avoiding respective weaknesses.

\subsection{Deployment Model}

Our architecture deploys \textbf{one contract per patient} rather than a multi-tenant registry. This design choice provides strong isolation between patients' data, simplifies permission management, and aligns with patient sovereignty principles. While this increases deployment costs (one-time contract creation), it eliminates cross-patient vulnerabilities and simplifies auditing. Healthcare institutions can deploy patient contracts on their behalf with appropriate delegation mechanisms. The contract does not use the ERC-721 token standard, instead implementing a simpler patient-specific authorization model.

\subsection{Contributions}

\begin{enumerate}
\item \textbf{Formal Architecture:} We specify how off-chain encrypted storage combined with on-chain access control achieves confidentiality against curious storage providers, integrity verification, and patient-controlled authorization with cryptographic attribution.

\item \textbf{Reference Implementation:} Complete Ethereum smart contract handling record registration, permission granting through signed messages with explicit nonce management, time-bounded access with revocation, update/rotation capabilities, and comprehensive auditability through event logs.

\item \textbf{Healthcare Integration:} Integration with HL7 FHIR standards, showing how FHIR resources serve as plaintext while supporting de-identified data release.

\item \textbf{Performance Characterization:} Empirical evaluation showing gas costs, latency profiles, and scalability across Layer-1 and Layer-2 deployments.
\end{enumerate}

\section{Background}

\subsection{Threat Model}

We consider adversaries with varying capabilities:

\begin{enumerate}
\item \textbf{Storage Provider ($\mathcal{S}$):} Honest-but-curious cloud provider (IPFS, AWS S3) who stores encrypted records. $\mathcal{S}$ follows protocol but attempts to learn patient information from stored data.

\item \textbf{Network Adversary ($\mathcal{N}$):} Observes blockchain transactions and network traffic. Cannot break cryptographic primitives but can analyze patterns, timing, and metadata.

\item \textbf{Revoked Recipient ($\mathcal{R}$):} Previously authorized healthcare provider whose access was revoked. Possesses historical wrapped keys and may have cached plaintext from authorized period.

\item \textbf{Malicious Provider ($\mathcal{M}$):} Healthcare provider attempting unauthorized access or privilege escalation beyond granted permissions.
\end{enumerate}

We assume cryptographic primitives are secure: adversaries cannot break AES-256-GCM, ECIES on secp256k1 (using standard KDF/MAC/encoding from audited libraries), or forge ECDSA signatures.

\subsection{Cryptographic Building Blocks}

\textbf{Symmetric Encryption:} AES-256-GCM provides authenticated encryption with associated data (AEAD). Given key $K$, nonce $N$, plaintext $M$, and associated data $AD$: $\text{Enc}(K,N,M,AD) \rightarrow (C,T)$ where $C$ is ciphertext, $T$ is authentication tag.

\textbf{Nonce Requirements:} AES-GCM security critically depends on nonce uniqueness. Implementations MUST use cryptographically secure random number generators (CSPRNG) or counter-mode deterministic random bit generators (CTR-DRBG) to ensure uniqueness. Consider XChaCha20-Poly1305 if nonce management is operationally risky, as it provides a larger nonce space and better misuse-resistance.

\textbf{Public Key Encryption:} ECIES (Elliptic Curve Integrated Encryption Scheme) on secp256k1 curve provides IND-CCA2 secure public key encryption. We use the following configuration for standards compliance:
\begin{itemize}
\item KDF: ANSI X9.63 with SHA-256
\item DEM: AES-128-CTR
\item MAC: HMAC-SHA-256
\item Point encoding: Uncompressed (0x04 prefix)
\item Ephemeral key included in ciphertext
\end{itemize}

\textbf{Digital Signatures:} ECDSA signatures on secp256k1 provide authentication and non-repudiation. EIP-712 structured data signing prevents signature malleability across contexts.

\textbf{Cryptographic Hash Functions:} We use SHA-256 for content digests throughout the system. SHA-256 provides 256-bit collision resistance and is the standard for IPFS content addressing and general cryptographic applications. The digest $d$ is consistently defined as $d = \text{SHA-256}(C || T || N || AD)$ where $C$ is ciphertext, $T$ is authentication tag, $N$ is nonce, and $AD$ is associated data.

\subsection{Blockchain Infrastructure}

\textbf{Smart Contracts:} Ethereum smart contracts execute deterministically based on transaction inputs. Gas fees incentivize efficient code and prevent denial-of-service.

\textbf{Layer-2 Solutions:} Rollups (Optimistic/ZK) reduce costs by batching transactions. However, data availability charges often dominate L2 costs, particularly for calldata-heavy operations like storing wrapped keys.

\textbf{Events:} Smart contract events provide efficient, queryable logs. Events cost 375 gas base + 375 gas per topic + 8 gas/byte for data.

\subsection{Healthcare Standards}

\textbf{HL7 FHIR:} Fast Healthcare Interoperability Resources define standard formats for clinical data. Resources include Patient, Observation, Medication, Procedure, etc. We use FHIR R4.

\textbf{HIPAA:} Requires access controls, audit logs, and data encryption. Our architecture maps to HIPAA's administrative (identity-based access control with explicit authorization), physical (N/A for digital systems), and technical safeguards (encryption at rest and in transit).

\textbf{GDPR:} European privacy regulation granting data subject rights. Blockchain immutability creates tension with ``right to be forgotten''—we address through minimal on-chain data, treating blockchain entries as legally-required audit logs potentially exempt from erasure under Article 17(3)(b), and implementing data minimization strategies.

\section{System Architecture}

\subsection{Overview}

The system separates concerns across three layers:

\begin{enumerate}
\item \textbf{Storage Layer:} Distributed storage (IPFS) or cloud storage (S3) holds encrypted health records. Storage providers see only ciphertext.

\item \textbf{Blockchain Layer:} Ethereum smart contracts manage metadata, permissions, and audit trails. Each patient has a dedicated contract instance.

\item \textbf{Application Layer:} Client applications handle encryption/decryption, signature generation, and user interfaces.
\end{enumerate}

\begin{figure}[t]
\centering
\includegraphics[width=\linewidth]{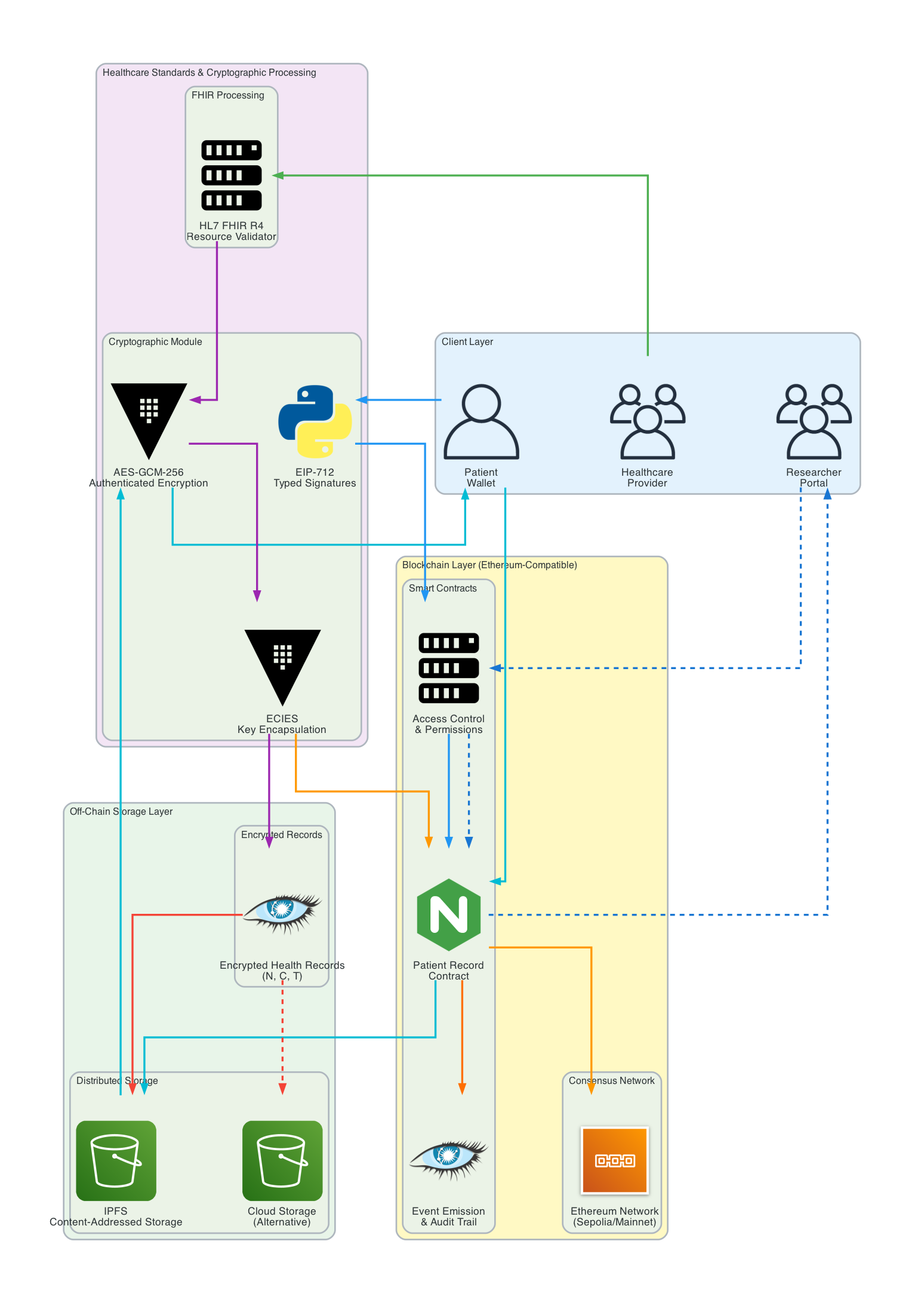}
\caption{System architecture showing separation between on-chain authorization and off-chain encrypted storage. Digest $d = \text{SHA-256}(C || T || N || AD)$.}
\label{fig:architecture}
\end{figure}

\subsection{Key Management Architecture}

\textbf{Key Registry Contract:} A separate registry contract maintains current encryption public keys for all participants. The invariant is that \texttt{getKey(user)} returns the latest non-revoked key and its version. Clients MUST fetch the key immediately prior to wrapped key computation to avoid time-of-check-time-of-use (TOCTOU) issues:

\begin{lstlisting}[caption={Key Registry Interface}, label={lst:keyregistry}]
interface IKeyRegistry {
    function registerKey(bytes memory publicKey) external;
    function rotateKey(bytes memory newPublicKey) external;
    function getKey(address user) 
        external view returns (bytes memory, uint256 version);
    function revokeKey() external;
    
    event KeyRegistered(address indexed user, bytes publicKey);
    event KeyRotated(address indexed user, 
        bytes newKey, uint256 version);
    event KeyRevoked(address indexed user);
}
\end{lstlisting}

\subsection{Data Model}

Each health record consists of:

\begin{itemize}
\item \textbf{Record ID ($rid$):} Unique identifier within patient's contract
\item \textbf{Plaintext ($M$):} FHIR resource bundle in JSON format
\item \textbf{Symmetric Key ($SymmK$):} AES-256 key for record encryption
\item \textbf{Ciphertext ($C$):} Encrypted record stored off-chain
\item \textbf{Storage Pointer ($ptr$):} IPFS CID or S3 URL
\item \textbf{Content Digest ($d$):} SHA-256 hash of complete ciphertext blob: $d = \text{SHA-256}(C || T || N || AD)$
\item \textbf{Wrapped Keys ($W$):} ECIES-encrypted $SymmK$ for authorized parties
\item \textbf{Permissions:} Time-bounded access grants with wrapped keys
\end{itemize}

\subsection{AEAD Metadata Considerations}

When using AEAD, the associated data (AD) parameter authenticates but does NOT encrypt additional context. To prevent metadata leakage:

\begin{enumerate}
\item \textbf{Minimal AD:} Use only non-sensitive, constant values (e.g., version number, fixed resource type identifier)
\item \textbf{Encrypted Metadata:} Include sensitive metadata (timestamps, detailed resource types) within the encrypted payload $M$ itself
\item \textbf{Constant Format:} Ensure AD format doesn't vary in ways that leak information through length or structure
\end{enumerate}

\section{Protocol Workflows}

\subsection{Workflow 1: Record Creation}

Patient creates new health record:

\begin{enumerate}
\item \textbf{Prepare Plaintext:} Construct FHIR bundle $M$ containing clinical data.

\item \textbf{Generate Symmetric Key:} Generate random AES-256 key: $SymmK \leftarrow \{0,1\}^{256}$ using CSPRNG.

\item \textbf{Encrypt Record:} Using AES-256-GCM with CSPRNG-generated nonce:
   $$(C, T) \leftarrow \text{AES-GCM-Enc}(SymmK, N, M, AD_{minimal})$$
   where $AD_{minimal}$ contains only non-sensitive version identifier.

\item \textbf{Upload to Storage:} Store $(C, T, N, AD_{minimal})$ to IPFS/S3. Receive storage pointer $ptr$.

\item \textbf{Compute Digest:} $d \leftarrow \text{SHA-256}(C || T || N || AD_{minimal})$.

\item \textbf{Wrap Key for Owner:} Using patient's public key from registry:
   $$W_{owner} \leftarrow \text{ECIES-Enc}(PK_P, SymmK)$$

\item \textbf{Register On-Chain:} Call \texttt{addRecord}$(ptr, d, W_{owner})$. Contract stores metadata, assigns $rid$, emits \texttt{RecordAdded}$(rid, d, ptr)$.
\end{enumerate}

\subsection{Workflow 2: Permission Grant}

Patient grants time-bounded access using EIP-712 signatures. Note that patients MUST generate unique nonces for each grant (e.g., 256-bit random values) to enable parallel permission grants. Figure~\ref{fig:permission-grant} illustrates this workflow:

\begin{enumerate}
\item \textbf{Retrieve Recipient Key:} Query key registry for recipient's current public key $PK_R$ immediately before wrapping.

\item \textbf{Wrap Symmetric Key:} $W_R \leftarrow \text{ECIES-Enc}(PK_R, SymmK)$.

\item \textbf{Generate Unique Nonce:} Patient generates unique nonce (256-bit random recommended).

\item \textbf{Prepare Typed Data:} Construct EIP-712 message with explicit nonce:
\begin{verbatim}
{
  recordId: rid,
  grantee: addr_R,
  expiration: timestamp,
  wrappedKey: W_R,
  nonce: uniqueRandomNonce
}
\end{verbatim}

\item \textbf{Sign Message:} $\sigma \leftarrow \text{ECDSA-Sign}(SK_P, \text{TypedDataHash}(message))$.

\item \textbf{Recipient Submits:} Recipient calls \texttt{grantPermissionBySig}$(rid, expiration, W_R, nonce, \sigma)$.

\item \textbf{Contract Verification:} Contract verifies signature, checks nonce hasn't been used, stores permission, marks nonce as consumed, emits \texttt{PermissionGranted}.
\end{enumerate}

\begin{figure}[t]
\centering
\includegraphics[width=\linewidth]{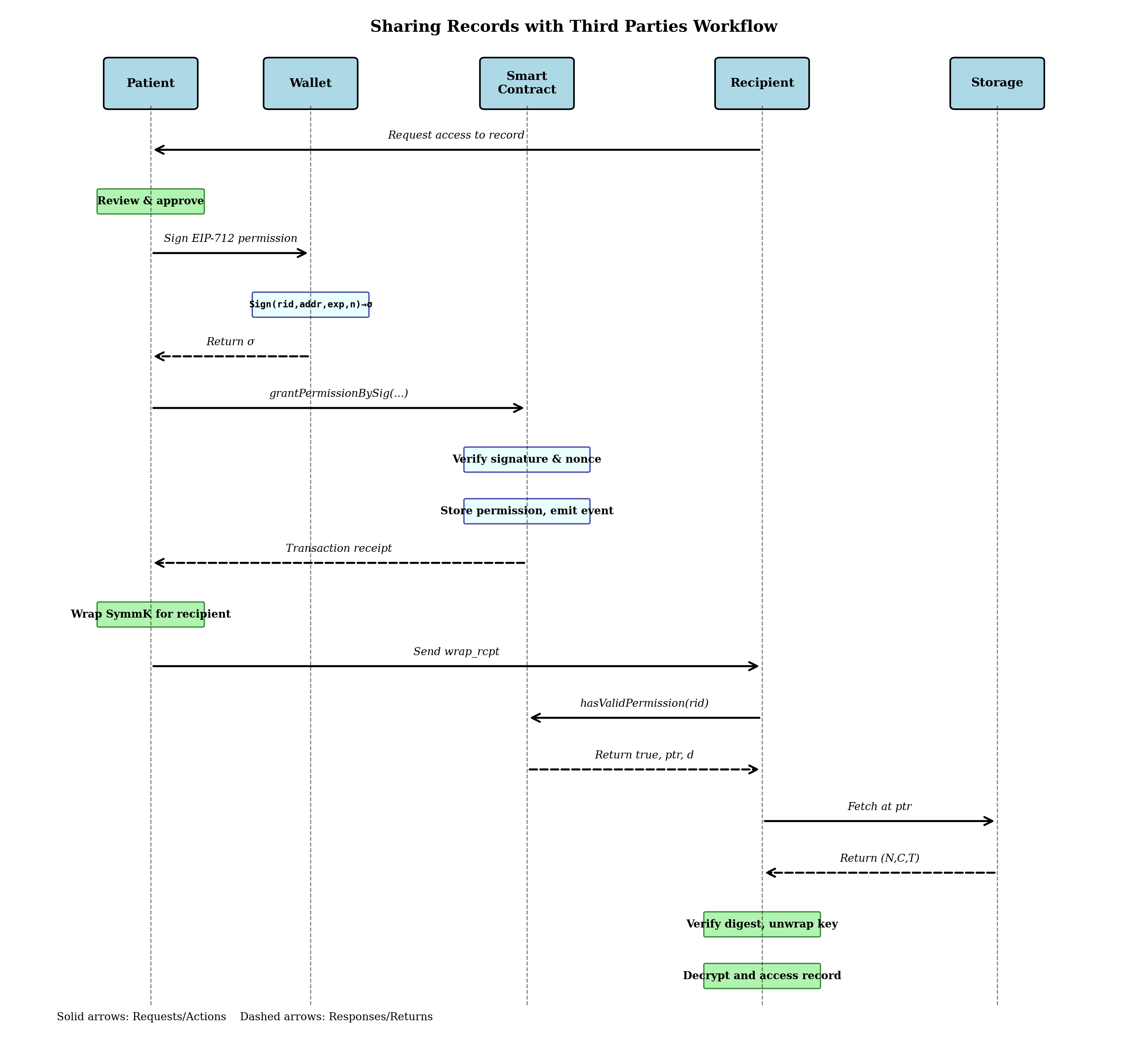}
\caption{Permission grant workflow using EIP-712 signed messages with explicit nonce management. Each nonce can be used only once.}
\label{fig:permission-grant}
\end{figure}

\subsection{Workflow 3: Record Access}

Authorized recipient retrieves and decrypts record as shown in Figure~\ref{fig:patient-access}. Note that the contract optionally gates metadata for UX consistency; confidentiality relies solely on encryption:

\begin{enumerate}
\item \textbf{Request Metadata:} Call \texttt{getRecordMetadata}$(rid)$ which returns $(ptr, d)$ if authorized. The gating is for user experience; the same data is available in public events. Patient additionally receives $W_{owner}$ through \texttt{getOwnerWrappedKey}$(rid)$.

\item \textbf{Retrieve Ciphertext:} Fetch $(C, T, N, AD)$ from storage using $ptr$.

\item \textbf{Verify Integrity:} Compute $d' = \text{SHA-256}(C || T || N || AD)$. Verify $d' = d$.

\item \textbf{Unwrap Key:} Decrypt wrapped key using recipient's private key:
   $$SymmK \leftarrow \text{ECIES-Dec}(SK_R, W_R)$$

\item \textbf{Decrypt Record:} $M \leftarrow \text{AES-GCM-Dec}(SymmK, N, C, T, AD)$.

\item \textbf{Optional: Log Access:} Call \texttt{logAccess}$(rid, \text{SHA-256}(accessDetails))$ to create on-chain access receipt (not just authorization).
\end{enumerate}

\begin{figure}[t]
\centering
\includegraphics[width=\linewidth]{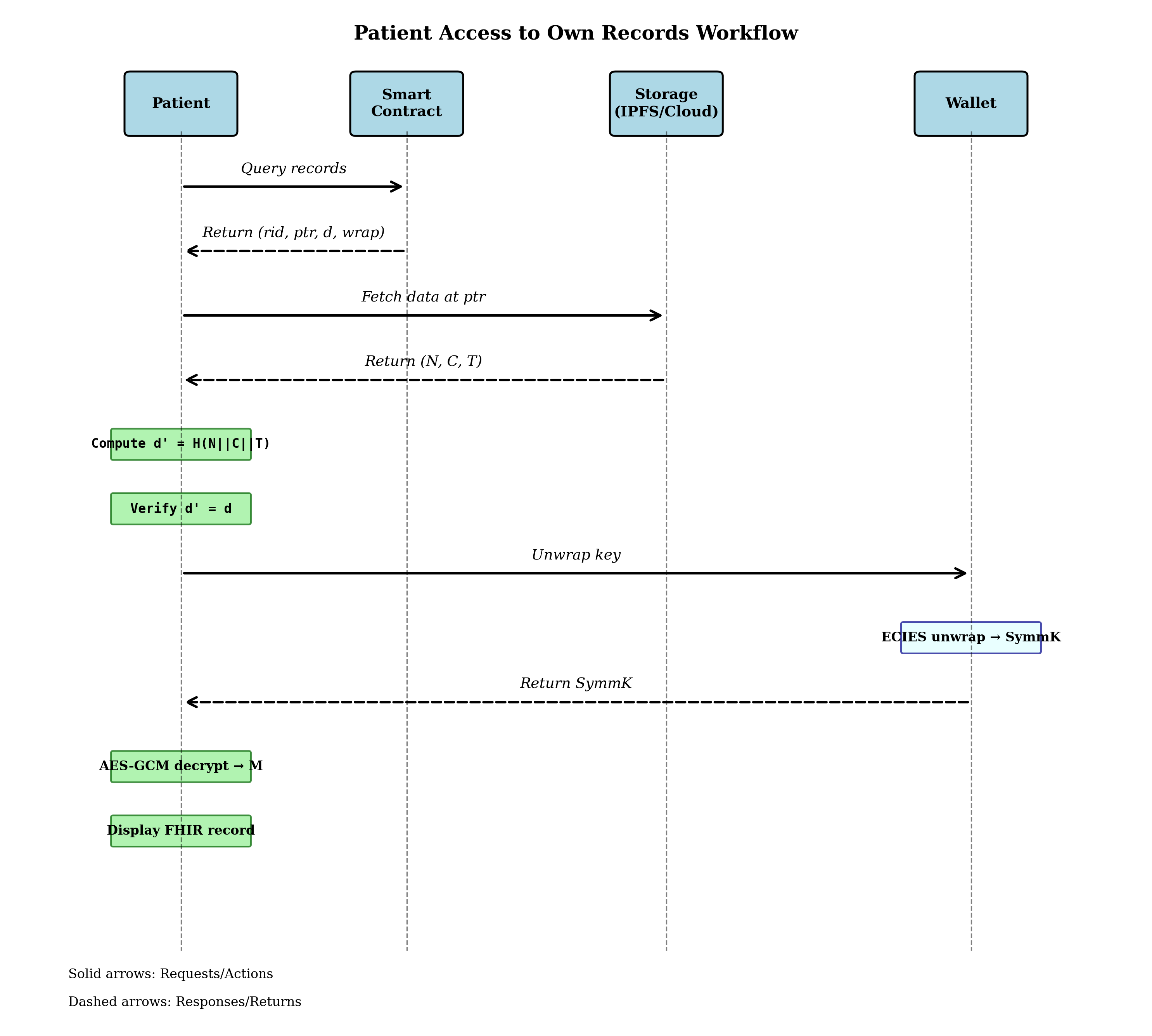}
\caption{Record access workflow. Contract gates metadata for UX consistency (data also in events). Digest verification: $d = \text{SHA-256}(C || T || N || AD)$.}
\label{fig:patient-access}
\end{figure}

\subsection{Workflow 4: Permission Revocation}

Patient revokes access:

\begin{enumerate}
\item \textbf{Submit Revocation:} Patient calls \texttt{revokePermission}$(rid, addr_R)$. Contract sets $\text{permissions}[rid][addr_R].\text{revoked} = true$ and emits \texttt{PermissionRevoked}$(rid, addr_R)$.

\item \textbf{Future Access Blocked:} Subsequent \texttt{getRecordMetadata}$(rid)$ calls by $addr_R$ fail permission check.
\end{enumerate}

\textbf{Limitation:} Revocation prevents future access but cannot retract already-decrypted plaintext. If recipient downloaded $M$ before revocation, they retain that data. This is fundamental to cryptographic access control and must be clearly communicated to patients.

\subsection{Workflow 5: Record Update/Key Rotation}

For stronger guarantees when trust relationships end or to update record content:

\begin{enumerate}
\item \textbf{Generate New Key:} $SymmK' \leftarrow \{0,1\}^{256}$.

\item \textbf{Re-encrypt Record:} Encrypt updated content $M'$ with $SymmK'$, upload to storage, get new $ptr'$.

\item \textbf{Compute New Digest:} $d' = \text{SHA-256}(C' || T' || N' || AD')$.

\item \textbf{Update On-Chain:} Call \texttt{updateRecord}$(rid, ptr', d', W'_{owner})$. 

\item \textbf{Invalidate Old Version:} Previous $(ptr, d, SymmK)$ become obsolete. Revoked recipients cannot access new version.

\item \textbf{Emit Event:} Contract emits \texttt{RecordUpdated}$(rid, d', ptr')$ for audit trail.
\end{enumerate}

\section{Smart Contract Implementation}

\subsection{Design Rationale}

The smart contract serves three roles: (1) \textbf{Metadata Registry}—storing storage pointers and content digests; (2) \textbf{Authorization Engine}—verifying permissions before metadata release; (3) \textbf{Audit Log}—emitting events for all operations.

We deploy \textbf{one contract per patient} using a simple authorization model without token standards. This provides:
\begin{itemize}
\item Strong isolation between patients
\item Simplified permission model (no cross-patient checks)
\item Clear ownership semantics
\item Independent upgrade paths per patient
\end{itemize}

While this increases deployment costs, it eliminates shared-state vulnerabilities and aligns with patient sovereignty.

\subsection{Core Data Structures}

\begin{lstlisting}[caption={Smart Contract Data Structures}, label={lst:structs}]
contract PatientHealthRecords is EIP712 {
    address public immutable patient;
    uint256 private _recordCounter;
    
    struct RecordMetadata {
        string storagePointer;
        bytes32 contentDigest;
        bytes wrappedKeyOwner;
        uint64 createdAt;
        uint64 updatedAt;
    }
    
    struct Permission {
        uint64 expiration;
        bool revoked;
        bytes wrappedKey;
    }
    
    mapping(uint256 => RecordMetadata) private _records;
    mapping(uint256 => mapping(address => Permission)) 
        public permissions;
    mapping(bytes32 => bool) public usedNonces;
    
    modifier onlyPatient() {
        require(msg.sender == patient, "Only patient");
        _;
    }
    
    modifier validRecordId(uint256 rid) {
        require(rid > 0 && rid <= _recordCounter, 
            "Invalid record ID");
        _;
    }
    
    // Event Declarations
    event RecordAdded(uint256 indexed rid, bytes32 digest, string ptr);
    event RecordUpdated(uint256 indexed rid, bytes32 digest, string ptr);
    event PermissionGranted(uint256 indexed rid, 
        address indexed grantee, uint64 expiration);
    event PermissionRevoked(uint256 indexed rid, 
        address indexed grantee);
    event EmergencyAccessGranted(bytes32 indexed grantId,
        uint256 indexed rid, address physician1, address physician2,
        uint8 justificationCode, uint64 expiration, uint64 requestTime);
    event EmergencyAccessConfirmed(bytes32 indexed grantId,
        uint256 indexed rid, address physician, 
        bytes32 justificationHash);
    event AccessLogged(uint256 indexed rid, 
        address indexed accessor, bytes32 detailsHash);
}
\end{lstlisting}

\subsection{Record Registration}

\begin{lstlisting}[caption={Record Registration Function}, label={lst:addrec}]
function addRecord(
    string memory ptr,
    bytes32 digest,
    bytes memory wrappedKey
) external onlyPatient returns (uint256) {
    _recordCounter++;
    uint256 rid = _recordCounter;
    
    _records[rid] = RecordMetadata({
        storagePointer: ptr,
        contentDigest: digest,
        wrappedKeyOwner: wrappedKey,
        createdAt: uint64(block.timestamp),
        updatedAt: uint64(block.timestamp)
    });
    
    emit RecordAdded(rid, digest, ptr);
    return rid;
}
\end{lstlisting}

Gas cost: $\sim$180,000 gas first record (cold storage), $\sim$165,000 gas subsequent records.

\subsection{Signature-Based Permission Grant with Explicit Nonce}

\begin{lstlisting}[caption={Permission Grant with Explicit Nonce Management}, label={lst:grant}, float=t]
function grantPermissionBySig(
    uint256 rid,
    uint64 expiration,
    bytes memory wrappedKey,
    uint256 nonce,
    bytes memory signature
) external validRecordId(rid) {
    // Construct nonce hash to prevent reuse
    bytes32 nonceHash = keccak256(
        abi.encodePacked(patient, nonce)
    );
    require(!usedNonces[nonceHash], "Nonce already used");
    
    // Recover signer with provided nonce
    address signer = _recoverSigner(
        rid, msg.sender, expiration,
        wrappedKey, nonce, signature
    );
    require(signer == patient, "Invalid signature");
    
    // Check expiration is future
    require(expiration > block.timestamp, 
        "Expiration must be future");
    
    // Mark nonce as used
    usedNonces[nonceHash] = true;
    
    // Store permission
    permissions[rid][msg.sender] = Permission({
        expiration: expiration,
        revoked: false,
        wrappedKey: wrappedKey
    });
    
    emit PermissionGranted(rid, msg.sender, expiration);
}

function _recoverSigner(
    uint256 rid, address grantee, uint64 expiration,
    bytes memory wk, uint256 nonce, bytes memory sig
) internal view returns (address) {
    bytes32 structHash = keccak256(abi.encode(
        PERMISSION_TYPEHASH,
        rid, grantee, expiration,
        keccak256(wk), nonce
    ));
    bytes32 hash = _hashTypedDataV4(structHash);
    return ECDSA.recover(hash, sig);
}
\end{lstlisting}

Gas cost: $\sim$78,000 gas. The ecrecover precompile itself costs $\sim$3,000 gas; storage operations and calldata processing account for the remainder.

\subsection{Permission Verification and Metadata Access}

\begin{lstlisting}[caption={Permission Check with Owner Key Retrieval}, label={lst:access}, float=t]
function hasValidPermission(uint256 rid)
    public view validRecordId(rid)
    returns (bool) {
    // Patient always has access
    if (msg.sender == patient) return true;
    
    Permission memory p = permissions[rid][msg.sender];
    return !p.revoked && 
           p.expiration > 0 &&
           p.expiration > block.timestamp; // Strict inequality
}

function getRecordMetadata(uint256 rid)
    external view validRecordId(rid)
    returns (string memory ptr, bytes32 digest) {
    require(hasValidPermission(rid),
        "Not authorized");
    
    RecordMetadata memory rec = _records[rid];
    ptr = rec.storagePointer;
    digest = rec.contentDigest;
    // Note: This gating is for UX; data is public in events
}

// New function for owner to retrieve their wrapped key
function getOwnerWrappedKey(uint256 rid)
    external view onlyPatient validRecordId(rid)
    returns (bytes memory) {
    return _records[rid].wrappedKeyOwner;
}
\end{lstlisting}

\subsection{Record Update and Key Rotation}

\begin{lstlisting}[caption={Record Update Function}, label={lst:update}, float=t]
function updateRecord(
    uint256 rid,
    string memory newPtr,
    bytes32 newDigest,
    bytes memory newOwnerWrappedKey
) external onlyPatient validRecordId(rid) {
    RecordMetadata storage rec = _records[rid];
    
    rec.storagePointer = newPtr;
    rec.contentDigest = newDigest;
    rec.wrappedKeyOwner = newOwnerWrappedKey;
    rec.updatedAt = uint64(block.timestamp);
    
    emit RecordUpdated(rid, newDigest, newPtr);
}
\end{lstlisting}

\subsection{Permission Revocation}

\begin{lstlisting}[caption={Permission Revocation}, label={lst:revoke}, float=t]
function revokePermission(
    uint256 rid, address grantee
) external onlyPatient validRecordId(rid) {
    require(permissions[rid][grantee].expiration > 0,
        "No permission to revoke");
    
    permissions[rid][grantee].revoked = true;
    emit PermissionRevoked(rid, grantee);
}
\end{lstlisting}

Gas cost: $\sim$31,000 gas.

\subsection{Emergency Access Pattern}

When patients are incapacitated and cannot grant permissions, emergency access is critical. The two-physician multisignature pattern ensures medical necessity while maintaining accountability. The wrapped keys for emergency physicians are generated by an institutional guardian service (HSM-backed) that unwraps the patient's owner key (or uses a pre-established envelope key) and re-wraps for each authorized physician—see §9.2 for institutional key management details. The contract merely anchors authorization and audit.

\begin{lstlisting}[caption={Corrected Emergency Access Implementation}, label={lst:emergency}]
mapping(address => bool) public emergencyPhysicians;
mapping(bytes32 => EmergencyGrant) public emergencyGrants;

struct EmergencyRequest {
    uint256 rid;
    uint8 justificationCode;
    uint64 requestTime;
    uint64 maxSkewSeconds;
}

struct EmergencyGrant {
    uint256 recordId;
    address physician1;
    address physician2;
    uint64 expiration;
    bool confirmed;
}

function emergencyGrantAccess(
    uint256 rid,
    address physician2,
    uint8 justificationCode,
    uint64 requestTime,
    uint64 maxSkewSeconds,
    bytes memory wrappedKey1,
    bytes memory wrappedKey2,
    bytes memory signature1,
    bytes memory signature2
) external validRecordId(rid) {
    require(emergencyPhysicians[msg.sender], 
        "Not emergency physician");
    require(emergencyPhysicians[physician2], 
        "Not emergency physician");
    require(msg.sender != physician2, 
        "Different physicians required");
    
    // Check time skew tolerance
    uint256 timeDiff = (block.timestamp > requestTime) ? 
        block.timestamp - uint256(requestTime) : 
        uint256(requestTime) - block.timestamp;
    require(timeDiff <= uint256(maxSkewSeconds), 
        "Request time outside tolerance");
    
    // Verify both signatures over EIP-712 struct
    bytes32 requestHash = _hashTypedDataV4(
        keccak256(abi.encode(
            EMERGENCY_REQUEST_TYPEHASH,
            rid,
            justificationCode,
            requestTime,
            maxSkewSeconds
        ))
    );
    
    address signer1 = ECDSA.recover(requestHash, signature1);
    address signer2 = ECDSA.recover(requestHash, signature2);
    require(signer1 == msg.sender && signer2 == physician2, 
        "Invalid signatures");
    
    // Create 2-hour emergency grant
    uint64 emergencyExpiration = uint64(block.timestamp + 2 hours);
    
    // Store wrapped keys for both physicians
    permissions[rid][msg.sender] = Permission({
        expiration: emergencyExpiration,
        revoked: false,
        wrappedKey: wrappedKey1
    });
    
    permissions[rid][physician2] = Permission({
        expiration: emergencyExpiration,
        revoked: false,
        wrappedKey: wrappedKey2
    });
    
    // Compute deterministic grant ID
    bytes32 grantId = keccak256(abi.encode(
        rid, requestTime, msg.sender, physician2
    ));
    
    // Record emergency grant for audit
    emergencyGrants[grantId] = EmergencyGrant({
        recordId: rid,
        physician1: msg.sender,
        physician2: physician2,
        expiration: emergencyExpiration,
        confirmed: false
    });
    
    emit EmergencyAccessGranted(
        grantId,  // Include grantId for easy confirmation
        rid, 
        msg.sender, 
        physician2, 
        justificationCode, 
        emergencyExpiration,
        requestTime
    );
}

function confirmEmergencyAccess(
    bytes32 grantId,
    bytes32 justificationHash
) external {
    EmergencyGrant storage grant = emergencyGrants[grantId];
    require(msg.sender == grant.physician1 || 
            msg.sender == grant.physician2, 
            "Not authorized physician");
    require(!grant.confirmed, "Already confirmed");
    require(block.timestamp <= grant.expiration + 24 hours,
            "Confirmation window expired");
    
    grant.confirmed = true;
    emit EmergencyAccessConfirmed(
        grantId,  // Include grantId for tracking
        grant.recordId, 
        msg.sender, 
        justificationHash
    );
}
\end{lstlisting}

\subsection{Optional Access Logging}

\begin{lstlisting}[caption={Optional Read Receipt Logging}, label={lst:logging}]
event AccessLogged(uint256 indexed rid, 
    address indexed accessor, bytes32 detailsHash);

function logAccess(uint256 rid, bytes32 detailsHash) 
    external {
    require(hasValidPermission(rid), "Not authorized");
    emit AccessLogged(rid, msg.sender, detailsHash);
}
\end{lstlisting}

\subsection{Security Properties}

\textbf{Read Gating:} \texttt{getRecordMetadata} enforces authorization via \texttt{hasValidPermission} for UX consistency. The same metadata is available in public events; confidentiality relies entirely on encryption, not on gating. The \texttt{permissions} mapping is public for transparency and indexing; confidentiality relies solely on encryption—exposing \texttt{wrappedKey} ciphertexts does not endanger plaintext.

\textbf{Complete Owner Access:} Patient can always retrieve their wrapped key via \texttt{getOwnerWrappedKey}, ensuring they never lose access to their own records.

\textbf{Transparency Trade-off:} Storage pointers $ptr$ and digests $d$ are public in events and contract storage. This is acceptable because pointers reference encrypted content. Without wrapped keys, adversaries obtain only ciphertext.

\textbf{Replay Protection:} Explicit nonce management prevents signature replay. Each nonce can be used exactly once. Patients must generate unique nonces (e.g., 256-bit random values) for each grant to enable parallel permission grants without ordering hazards. Clients \textbf{MUST} generate a fresh 256-bit random nonce per grant and persist it until on-chain confirmation to avoid accidental reuse.

\textbf{Time-Bounded Access:} All permissions have expiration timestamps. Expired permissions fail \texttt{hasValidPermission} checks automatically, using strict inequality ($>$ rather than $\geq$) for clear expiration semantics.

\section{Security Analysis}

\subsection{Confidentiality Against Storage Providers}

\begin{theorem}[Storage Provider Confidentiality]
Assuming AES-GCM provides IND-CCA2 security and ECIES (with specified parameters) provides IND-CCA2 security, no honest-but-curious storage provider $\mathcal{S}$ can distinguish encrypted health records from random strings with non-negligible advantage.
\end{theorem}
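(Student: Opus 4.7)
The plan is a standard reduction-and-hybrid argument. First I would inventory exactly what $\mathcal{S}$ observes: at the storage layer, only the tuples $(C, T, N, AD_{\text{minimal}})$ per record; and as a network-capable curious party, $\mathcal{S}$ may also watch the public chain, which exposes $ptr$, $d$, and the ECIES-wrapped keys $W_R$. Crucially, by Workflow~1 the symmetric key $SymmK$ is freshly sampled from a CSPRNG for every record and never appears in cleartext anywhere the adversary can reach; the only function of $SymmK$ visible to $\mathcal{S}$ is through $(C,T)$ and through the ECIES wrappings.

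The reduction proceeds in two stages. In Stage~1 I would replace every on-chain wrapping $W_R = \text{ECIES-Enc}(PK_R, SymmK)$ with an ECIES encryption of an independent uniform $256$-bit string. A standard hybrid over the $q_W$ wrapping events, coupled with IND-CCA2 of ECIES (the reduction simulates the chain by forwarding its own challenge ciphertext in the $i$-th slot and answering decryption queries via its oracle), shows this substitution is indistinguishable; after Stage~1, each $SymmK_i$ is statistically independent of everything $\mathcal{S}$ sees except through the storage ciphertext itself. In Stage~2 I would hybridize over records $1,\dots,q_R$, swapping each $(C_i, T_i)$ in turn for a uniform string of matching length. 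Neighbouring Stage-2 hybrids differ only in the output of AES-GCM under a fresh, uniformly random key, which is precisely the IND-CCA2 (equivalently, real-or-random) distinguishing experiment, so each hop adds at most the AES-GCM advantage.

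The main obstacle is the nonce-uniqueness side condition underpinning AES-GCM's security: if a nonce is ever reused under the same key, IND-CCA2 collapses catastrophically and the hybrid breaks. I would handle this by conditioning on the event that nonces are distinct per key and absorbing the collision probability (roughly $q_R^2 / 2^{96}$ for $96$-bit random nonces, or statistically negligible for $192$-bit XChaCha nonces) as an additive term. Because Stage~1 has already made the per-record keys independent, the AES-GCM key is used exactly once within Stage~2, so the side condition is trivially satisfied and the analysis reduces to the clean single-message statement. Summing over both stages gives
\[
\text{Adv}^{\text{dist}}_{\mathcal{S}} \;\le\; q_W \cdot \text{Adv}^{\text{IND-CCA2}}_{\text{ECIES}} \;+\; q_R \cdot \text{Adv}^{\text{IND-CCA2}}_{\text{AES-GCM}} \;+\; \text{negl},
\]
which is negligible under the theorem's hypotheses, completing the proof.
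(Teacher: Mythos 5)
Your proof is correct and rests on the same decomposition as the paper's own (one-paragraph) sketch: ECIES keeps $SymmK$ information-theoretically out of $\mathcal{S}$'s reach, after which AES-GCM renders each $(C,T)$ indistinguishable from random. The paper phrases this as an informal contradiction argument with no bound; your two-stage hybrid, the explicit $q_W\cdot\mathrm{Adv}_{\text{ECIES}}+q_R\cdot\mathrm{Adv}_{\text{AES-GCM}}+\mathrm{negl}$ accounting, and the treatment of the nonce-uniqueness side condition (which the paper relegates to a separate implementation remark) are a strictly more rigorous rendering of the same idea, so no gap to report.
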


\begin{proof}[Proof Sketch]
By contradiction. Suppose adversary $\mathcal{S}$ has non-negligible advantage $\epsilon$ in distinguishing encrypted records. Storage contains $(C, T, N, AD_{minimal})$ where $(C, T)$ are outputs of $\text{AES-GCM-Enc}(SymmK, N, M, AD_{minimal})$ and $AD_{minimal}$ contains only non-sensitive version identifiers. By IND-CCA2 security of AES-GCM, $(C, T)$ is computationally indistinguishable from random strings without $SymmK$. Since $\mathcal{S}$ cannot obtain $SymmK$ (wrapped keys $W$ are ECIES ciphertexts using specified KDF/MAC, also indistinguishable from random without private keys), $\mathcal{S}$ cannot distinguish $(C, T)$ from random. This contradicts the assumption of advantage $\epsilon > 0$. $\square$
\end{proof}

\subsection{Integrity Verification}

\begin{theorem}[Tamper Detection]
Assuming SHA-256 is collision-resistant, any modification to stored records is detected with probability $\geq 1 - 2^{-128}$.
\end{theorem}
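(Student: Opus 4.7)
The plan is to reduce tamper detection to the collision resistance of SHA-256, leveraging the fact that the anchor digest $d = \text{SHA-256}(C || T || N || AD)$ is recorded on-chain at registration time and thus, under the assumed security of blockchain consensus, cannot be altered by any adversary in our threat model ($\mathcal{S}$, $\mathcal{N}$, $\mathcal{R}$, or $\mathcal{M}$). Verification occurs at the recipient: after fetching the blob from off-chain storage they recompute $d' = \text{SHA-256}(C' || T' || N' || AD')$ on whatever was returned and abort unless $d' = d$. Confidentiality is not relevant here—we only need to argue that a silent substitution is caught.

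First I would formalize the adversary's winning condition: tampering succeeds when storage returns a tuple $(C', T', N', AD') \neq (C, T, N, AD)$ such that $d' = d$. This event is exactly a second-preimage of $d$ under SHA-256, and trivially yields a SHA-256 collision on the pair of hashed strings. The reduction is then immediate: any efficient adversary that evades detection with probability $\epsilon$ yields a collision finder with success probability $\epsilon$. Invoking the standard $2^{-128}$ generic collision bound on a 256-bit output (birthday bound) gives detection probability at least $1 - 2^{-128}$; since second-preimage resistance is strictly stronger than collision resistance, the stated bound is conservative.

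The main obstacle I anticipate is the serialization of the concatenation $C || T || N || AD$. Because $C$ and $AD$ are variable-length while $N$ and $T$ have fixed widths only by convention, a naive byte concatenation could in principle produce ambiguous parses in which two distinct tuples serialize to the same bit string. Such a coincidence would defeat the reduction without constituting a SHA-256 collision. I would therefore make explicit an encoding assumption: fixed-width $N$ (12 bytes) and $T$ (16 bytes) as specified by AES-GCM, together with a prefix-free or length-prefixed framing of $C$ and $AD$, so that equality of the hashed strings implies equality of the tuples. Once this canonical-encoding assumption is discharged, the reduction closes cleanly and the $1 - 2^{-128}$ bound follows directly from the collision resistance assumption already in the theorem hypothesis.
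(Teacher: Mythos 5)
Your reduction is the same one the paper's proof sketch uses: an undetected substitution is exactly a pair of distinct inputs hashing to the on-chain digest $d$, hence a SHA-256 collision, bounded by $2^{-128}$ via the birthday bound. Your added insistence on an unambiguous (length-prefixed, fixed-width) encoding of $C || T || N || AD$ is a legitimate refinement the paper glosses over --- without it two distinct tuples could serialize identically and evade detection with no hash collision --- and you also correctly cover modifications to $N$ or $AD$, whereas the paper's sketch only requires $(C', T') \neq (C, T)$.
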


\begin{proof}[Proof Sketch]
On-chain digest $d = \text{SHA-256}(C || T || N || AD)$ commits to encrypted blob. To pass verification, adversary must produce $(C', T', N', AD')$ where $\text{SHA-256}(C' || T' || N' || AD') = d$ with $(C', T') \neq (C, T)$. This requires finding collision in SHA-256. With 256-bit output and collision resistance, success probability is $\leq 2^{-128}$ (birthday bound). $\square$
\end{proof}

\textbf{Authentication Tag:} AES-GCM's tag $T$ provides additional integrity protection. Even if adversary finds hash collision, forging valid tag without $SymmK$ has negligible probability (AES-GCM provides 128-bit authentication security).

\subsection{Authorization Authenticity}

\begin{theorem}[Cryptographically Attributable Authorization]
Assuming ECDSA over secp256k1 provides existential unforgeability under chosen message attacks (EUF-CMA), no adversary without patient's private key $SK_P$ can forge valid permission signatures with non-negligible probability.
\end{theorem}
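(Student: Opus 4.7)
The plan is to prove the theorem by a standard cryptographic reduction: any adversary $\mathcal{A}$ that produces an accepted permission grant without knowing $SK_P$ is converted into an algorithm $\mathcal{B}$ that wins the EUF-CMA game for ECDSA over secp256k1 with comparable advantage.

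First, I would set up the reduction. $\mathcal{B}$ receives a challenge public key $PK^*$ and access to a signing oracle $\mathcal{O}_{\mathrm{sign}}(\cdot)$ from the ECDSA challenger. $\mathcal{B}$ deploys the \texttt{PatientHealthRecords} contract with $\mathtt{patient}$ set to the address derived from $PK^*$, initializes the EIP-712 domain (name, version, \texttt{chainId}, \texttt{verifyingContract}), and hands the contract to $\mathcal{A}$. Whenever $\mathcal{A}$ requests a legitimate permission grant for tuple $(rid, grantee, expiration, wk, nonce)$, $\mathcal{B}$ constructs the struct hash $h = \mathtt{\_hashTypedDataV4}(\mathrm{keccak256}(\mathrm{abi.encode}(\mathtt{PERMISSION\_TYPEHASH}, rid, grantee, expiration, \mathrm{keccak256}(wk), nonce)))$, queries $\sigma \leftarrow \mathcal{O}_{\mathrm{sign}}(h)$, and returns $\sigma$ to $\mathcal{A}$. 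Let $Q$ denote the multiset of hashes queried to $\mathcal{O}_{\mathrm{sign}}$ during the simulation.

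Next, I would characterize what it means for $\mathcal{A}$ to succeed. The contract accepts a forgery $(rid^*, grantee^*, expiration^*, wk^*, nonce^*, \sigma^*)$ only if (i) the derived \texttt{nonceHash} is unused, (ii) $\mathtt{ECDSA.recover}(h^*, \sigma^*) = PK^*$, and (iii) $expiration^* > \mathtt{block.timestamp}$. When $\mathcal{A}$ outputs such a forgery, $\mathcal{B}$ computes $h^*$ exactly as in step one and submits $(h^*, \sigma^*)$ as its EUF-CMA forgery. The critical claim is that $h^* \notin Q$ with overwhelming probability. Suppose for contradiction $h^* \in Q$; then by collision-resistance of keccak256 (which underlies both \texttt{\_hashTypedDataV4} and the struct encoding), the preimage tuple $(rid^*, grantee^*, expiration^*, wk^*, nonce^*)$ must equal some previously queried tuple, in which case $nonce^*$ was already marked in \texttt{usedNonces} and condition (i) fails, contradicting acceptance. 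Hence $(h^*, \sigma^*)$ is a valid EUF-CMA forgery, giving $\mathcal{B}$ advantage at least $\epsilon - \mathrm{negl}(\lambda)$, contradicting the EUF-CMA assumption.

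The main obstacle is handling the boundary between message-level forgeries and signature-level malleability, together with cross-context replay. Two issues require care. First, ECDSA signatures are malleable in $s$, so $(r, s)$ and $(r, -s \bmod n)$ both verify; I would invoke the library convention (enforced by OpenZeppelin's \texttt{ECDSA.recover}) that rejects the high-$s$ form, so the forgery $\sigma^*$ is canonical and counts as a fresh signature even if a non-canonical variant was queried. Second, the EIP-712 domain separator binds $h^*$ to this specific contract instance and chain; without it, signatures from other contracts or deployments could be replayed here. I would make this explicit by noting that the domain separator is included inside $\mathtt{\_hashTypedDataV4}$, so any query $\mathcal{O}_{\mathrm{sign}}$ answered on behalf of a different simulated contract lies in a disjoint image and cannot collide with $h^*$ except via a keccak256 collision. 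With these two observations, the reduction is tight up to negligible terms, completing the proof. $\square$
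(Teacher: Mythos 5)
Your reduction is correct and follows essentially the same route as the paper's argument, which likewise rests on EUF-CMA of ECDSA, the EIP-712 domain separator for cross-context binding, and nonce uniqueness for same-message replay. The paper gives only a three-sentence proof sketch asserting these points, so your explicit signing-oracle simulation, the freshness argument for $h^*$ via keccak256 collision resistance, and the handling of ECDSA $s$-malleability are elaborations of steps the paper leaves implicit rather than a different approach.
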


\begin{proof}[Proof Sketch]
Suppose adversary $\mathcal{A}$ forges signature $\sigma'$ for message $(rid, addr_R, expiration, W_R, nonce)$ that passes verification. By EUF-CMA security of ECDSA, this occurs with negligible probability without $SK_P$. EIP-712 domain separator binds signature to specific contract and chain, preventing cross-contract/cross-chain replay. Unique nonces prevent same-message replay. $\square$
\end{proof}

\subsection{Replay Attack Prevention}

Nonces provide replay protection:
\begin{property}[Unique Nonce Consumption]
For each patient, each nonce can be used exactly once. Signature $\sigma$ for $(rid, addr_R, expiration, W_R, nonce)$ is valid only if nonce has not been previously used. After successful grant, nonce is marked as consumed, invalidating all future signatures using the same nonce.
\end{property}

\textbf{Time-Based Replay:} Expiration timestamps prevent long-term replay. Even if adversary captures signature, using it after expiration fails \texttt{require(expiration > block.timestamp)} check.

\subsection{Auditability}

\begin{property}[Authorization Audit Trail Completeness]
All authorization-changing operations emit events: \texttt{RecordAdded}, \texttt{PermissionGranted}, \texttt{PermissionRevoked}, \texttt{RecordUpdated}, \texttt{EmergencyAccessGranted}. Events are permanently stored in blockchain logs, queryable by any observer. The system provides complete authorization history (who was granted access), not complete access history (who actually retrieved/viewed records) unless optional \texttt{logAccess} is used.
\end{property}

Blockchain immutability ensures events cannot be deleted or modified after confirmation. Patients, regulators, or auditors can reconstruct complete authorization history by filtering events for specific records or addresses.

\subsection{Threat Analysis Summary}

Table~\ref{tab:threats} summarizes threat coverage.

\begin{table}[t]
\centering
\caption{Threat Coverage}
\label{tab:threats}
\small
\begin{tabular}{@{}lll@{}}
\toprule
\textbf{Threat} & \textbf{Defense} & \textbf{Residual Risk} \\
\midrule
Storage snooping & Encryption & None \\
Data tampering & Digest verification & None \\
Unauthorized access & On-chain authz & None \\
Permission forgery & EIP-712 signatures & None \\
Replay attacks & Unique nonces & None \\
Audit log tampering & Blockchain immutability & None \\
Patient key theft & --- & High \\
Malicious patient & --- & Inherent \\
Storage unavailability & Redundancy & Low \\
DoS on blockchain & Fees deter, multi-L2 & Medium \\
\bottomrule
\end{tabular}
\end{table}

\section{Performance Evaluation}

\subsection{Experimental Setup}

\textbf{Blockchain:} Ethereum Sepolia testnet, September-October 2024. Transactions via Web3.js v4.2.1.

\textbf{Storage:} IPFS (go-ipfs v0.18) on 8-core, 16 GB RAM server. AWS S3 for comparison. Client in Dhaka, Bangladesh; IPFS nodes in Singapore; S3 in us-east-1.

\textbf{Client:} Intel Core i7-1165G7 @ 2.80GHz, 16 GB RAM, Ubuntu 22.04. Web Crypto API for AES-GCM, eth-crypto for ECIES/ECDSA.

\textbf{Workload:} Synthea v3.2.0 generated FHIR R4 resources. Record sizes: 1 KB (observations) to 10 MB (imaging reports). 50 trials per measurement for mean and 95th percentile.

\subsection{Cryptographic Operations}

Table~\ref{tab:crypto-ops} shows operation latencies.

\begin{table}[t]
\centering
\caption{Cryptographic Operation Latency}
\label{tab:crypto-ops}
\small
\begin{tabular}{@{}lrr@{}}
\toprule
\textbf{Operation} & \textbf{Mean (ms)} & \textbf{95th \% (ms)} \\
\midrule
AES-GCM Enc (1 KB) & 0.42 & 0.58 \\
AES-GCM Enc (100 KB) & 2.1 & 2.7 \\
AES-GCM Enc (1 MB) & 18.3 & 22.1 \\
AES-GCM Enc (10 MB) & 181.5 & 205.3 \\
AES-GCM Dec (1 MB) & 16.8 & 20.5 \\
ECIES Key Wrap & 3.2 & 4.1 \\
ECIES Key Unwrap & 3.5 & 4.3 \\
ECDSA Sign (EIP-712) & 2.8 & 3.6 \\
ECDSA Verify & 3.1 & 3.9 \\
SHA-256 (1 MB) & 12.1 & 14.8 \\
\bottomrule
\end{tabular}
\end{table}

AES-GCM achieves $\sim$55 MB/s throughput, linear in plaintext size. Hardware AES acceleration (AES-NI) provides these speeds on modern processors. Public-key operations (ECIES, ECDSA) take 3-4 ms regardless of record size—operate on fixed-size keys/hashes. For typical sharing (one signature, one key wrap), total cryptographic overhead <10 ms, negligible vs network latency.

\subsection{On-Chain Gas Costs}

Table~\ref{tab:gas-detailed} reports gas consumption.

\begin{table}[t]
\centering
\caption{Smart Contract Gas Consumption}
\label{tab:gas-detailed}
\small
\begin{tabular}{@{}lr@{}}
\toprule
\textbf{Operation} & \textbf{Gas} \\
\midrule
\multicolumn{2}{l}{\textit{Ethereum Mainnet (L1)}} \\
Contract Deployment & 2,341,829 \\
addRecord (first) & 183,742 \\
addRecord (subsequent) & 166,542 \\
grantPermissionBySig & 78,331 \\
revokePermission & 31,204 \\
updateRecord & 45,123 \\
emergencyGrantAccess & 156,432 \\
confirmEmergencyAccess & 35,211 \\
\midrule
\multicolumn{2}{l}{\textit{Layer-2 (Arbitrum One)}} \\
addRecord & 14,392 \\
grantPermissionBySig & 6,127 \\
\midrule
\multicolumn{2}{l}{\textit{Layer-2 (zkSync Era)}} \\
addRecord & 11,243 \\
grantPermissionBySig & 5,894 \\
\bottomrule
\end{tabular}
\end{table}

Layer-2 solutions reduce gas consumption 10-13$\times$ through batching and off-chain computation. However, data availability charges for posting calldata to L1 can dominate actual costs during high congestion periods. Fees on rollups are dominated by L1 data-availability; published 'gas used' reductions (Table 3) do not directly translate linearly to USD costs.

\textbf{Cost Breakdown:} \texttt{grantPermissionBySig} costs: signature verification ($\sim$3,000 gas for ecrecover precompile), storage writes ($\sim$40,000 gas for new permission), state updates ($\sim$20,000 gas), events ($\sim$10,000 gas), execution overhead ($\sim$5,000 gas).

\textbf{Economic Viability:} At \$3,000/ETH and 30 gwei gas price, permission grant costs $\sim$\$7 on L1, $\sim$\$0.54 on L2. Healthcare institutions or insurance can subsidize L1 costs; L2 enables direct patient payment models with substantially lower costs.

\subsection{End-to-End Access Latency}

Table~\ref{tab:e2e-latency} shows total record access time.

\begin{table}[t]
\centering
\caption{End-to-End Access Latency (1 MB Records)}
\label{tab:e2e-latency}
\small
\begin{tabular}{@{}lrr@{}}
\toprule
\textbf{Component} & \textbf{Mean (ms)} & \textbf{95th \% (ms)} \\
\midrule
Blockchain query & 245 & 312 \\
IPFS retrieval & 1,087 & 1,523 \\
Integrity verification (SHA-256) & 12 & 15 \\
Key unwrapping (ECIES) & 4 & 5 \\
AES-GCM decryption & 17 & 21 \\
\midrule
\textbf{Total (IPFS)} & \textbf{1,365} & \textbf{1,876} \\
\midrule
S3 retrieval & 423 & 589 \\
\textbf{Total (S3)} & \textbf{701} & \textbf{942} \\
\bottomrule
\end{tabular}
\end{table}

Storage retrieval dominates latency. IPFS exhibits higher variance due to distributed nature—retrieval from distant/slow peers. S3 provides consistent performance through CDN. Cryptographic operations contribute <5\% total latency. The abstract reports mean values (0.7--1.4 s) for typical performance expectations.

\textbf{Scalability:} For 10 MB records, total latency increases to $\sim$3.5 s (IPFS) or $\sim$2.1 s (S3). Encryption/decryption scale linearly but remain small fraction. For typical clinical documents (10-500 KB), latency stays $<$1 s.

\subsection{Storage Overhead}

Encryption overhead: AES-GCM adds 12 bytes (nonce) + 16 bytes (tag) = 28 bytes per record. For 1 MB plaintext: overhead 0.003\%. Content-addressing (IPFS CID) adds 32-byte identifier. Total storage penalty negligible.

On-chain storage per record: 32 bytes (digest) + 50 bytes (storage pointer string) + 110-150 bytes (wrapped key, implementation-dependent with ECIES ephemeral public key + IV + MAC + ciphertext) $\approx$ 192-232 bytes. Persistent storage on Ethereum is charged per 32-byte storage slot (20,000 gas for first-write). Dynamic types (strings/bytes) span multiple slots. In practice, \texttt{addRecord} costs $\sim$166-184k gas on L1 (Table 3), dominated by storage writes and event emission; this measured figure is more representative than per-byte estimates.

\section{Privacy and Regulatory Compliance}

\subsection{Metadata Privacy}

\textbf{Public Information:} Storage pointers $ptr$ and content digests $d$ are public on-chain, visible in events and contract storage. This transparency is \emph{by design}—pointers reference exclusively \emph{encrypted} content. Without wrapped keys, adversaries obtain only ciphertext.

\textbf{Metadata Leakage:} On-chain data reveals: (1) Which addresses participate in health data sharing; (2) How many records each patient has; (3) When records are created/accessed; (4) Which recipients have permissions.

\textbf{Mitigation:} Recipients concerned about linkability should use fresh addresses per patient via HD wallet derivation (BIP-32/BIP-44). Patients can use mixing services or privacy-preserving layer-2s (zkSync) to obscure transaction origins. However, complete metadata privacy contradicts auditability—trade-off between transparency and privacy must be balanced per deployment requirements.

\subsection{HIPAA Compliance}

Health Insurance Portability and Accountability Act (HIPAA) mandates safeguards for protected health information (PHI). Our system addresses HIPAA requirements:

\textbf{Administrative Safeguards:} Patients define access policies through cryptographic permissions. Audit logs track all authorization events. Risk analysis identifies vulnerabilities (key management, storage availability).

\textbf{Physical Safeguards:} Encrypted storage prevents PHI exposure during theft/breach. Hardware wallets protect private keys.

\textbf{Technical Safeguards:} Authentication (ECDSA signatures), encryption (AES-256), integrity (SHA-256), audit trails (blockchain events for authorization history), transmission security (TLS for off-chain communication).

\textbf{Access Control:} Identity-based access through permission grants. Patient controls who accesses what, when. Emergency access mechanisms (Section 6.8) balance safety with privacy.

\textbf{Audit Logs (§164.312(b)):} Blockchain events create tamper-proof audit trail of all authorization events. The system provides complete authorization history (who was granted access) though not access history (who actually retrieved records) unless optional \texttt{logAccess} is used.

\subsection{GDPR Compliance}

General Data Protection Regulation (GDPR) grants data subjects rights over personal data. Our architecture supports GDPR requirements:

\textbf{Right to Access:} Patients always have permission to their records via \texttt{getOwnerWrappedKey}. Can query blockchain for complete metadata and retrieve encrypted data anytime.

\textbf{Right to Portability:} FHIR format enables standards-based data export. Patients can download encrypted records, decrypt with their keys, and transfer to other systems.

\textbf{Right to Erasure:} Complex due to blockchain immutability. We implement a three-tier approach:
\begin{enumerate}
\item Off-chain data can be deleted from storage
\item On-chain pointers can be nullified through contract updates
\item Permission/audit events remain as legally-required logs
\end{enumerate}

Organizations should establish legal basis treating blockchain entries as audit logs potentially exempt from erasure under regulatory compliance requirements (GDPR Article 17(3)(b)).

\textbf{Data Minimization:} Only encrypted pointers and hashes stored on-chain. No protected health information appears in blockchain. Metadata in associated data minimized to version identifiers.

\textbf{Consent Management:} EIP-712 signatures provide explicit, informed, unambiguous consent for each data sharing event with cryptographic proof.

\subsection{De-Identified Data for Research}

Healthcare research requires large datasets while protecting privacy. We support de-identification pipelines creating research-safe datasets.

\textbf{De-Identification Methods:} We implement Safe Harbor method (HIPAA) removing 18 identifier categories: names, addresses (except ZIP regions), dates (except year), phone numbers, emails, SSNs, medical record numbers, account numbers, certificate numbers, vehicle identifiers, device identifiers, URLs, IP addresses, biometric identifiers, full-face photos, and other unique identifiers.

\textbf{Pipeline Architecture:} Patients select records for research contribution. De-identification service: (1) Decrypts records using patient-authorized temporary key; (2) Applies transformations (identifier removal, quasi-identifier generalization, date shifting); (3) Re-encrypts with research consortium key; (4) Uploads to research database with new permissions.

\textbf{Performance:} Evaluation on 1,000 Synthea FHIR bundles (mean 127 KB): de-identification takes mean 47.3 ms/record (95th: 68.5 ms). Processing 1M records requires $\sim$13 hours sequential, parallelizable to $<$1 hour with 32-core cluster.

\textbf{Re-Identification Risk:} Even de-identified data carries re-identification risk through quasi-identifiers. We apply $k$-anonymity ($k \geq 5$)~\cite{sweeney2002} and evaluate with ARX Data Anonymization Tool. Results show:
\begin{itemize}
\item Maximum risk: 0.18\% (highest risk individual)
\item Average risk: 0.04\% (across all records)
\item Population uniqueness: 0.09\%
\end{itemize}

These metrics meet standard thresholds for de-identified data release while maintaining research utility.

\section{Discussion and Future Work}

\subsection{Deployment Barriers}

\textbf{Integration Challenges:} Most EHR systems don't natively support blockchain. Middleware or API gateways can bridge, but introduce operational dependencies. Standards development and vendor cooperation are essential.

\textbf{User Experience:} Non-technical patients need intuitive interfaces hiding cryptographic complexity. Wallet software must balance security (key protection) with usability (avoiding lock-out). Social recovery mechanisms—where trusted contacts help restore access—show promise but need careful design to avoid vulnerabilities.

\textbf{Economic Models:} Gas costs require sustainable funding. Options: (1) Healthcare institutions subsidize as infrastructure cost; (2) Insurance covers as benefit; (3) Patients pay directly (raises equity concerns); (4) Tiered models with basic access funded by institutions. Contract deployment costs ($\sim$2.3M gas on L1, $\sim$\$200 at typical prices) can be amortized over patient lifetime.

\subsection{Institutional Key Management}

Per-recipient key wrapping becomes impractical for large institutions (hospitals with hundreds of staff). We propose \textbf{institutional guardian keys}:

\textbf{Design:} Organizations represented by single encryption key managed institutionally. Patient wraps $SymmK$ once for hospital's key. Hospital maintains internal access control determining which staff can decrypt. Reduces on-chain operations from $O(n)$ per institution (where $n$ = staff count) to $O(1)$.

\textbf{Implementation:} Institution deploys guardian contract: (1) Registers encryption public key; (2) Maintains staff roster; (3) Provides \texttt{requestDecryption}$(rid, patientContract)$ for staff; (4) Verifies caller is authorized staff; (5) Uses institutional private key (in HSM) to unwrap $SymmK$ and re-wrap for requesting clinician.

\textbf{Trade-offs:} Trades patient-controlled fine-grained access for scalability/usability. Patients trust institution to enforce internal policies rather than controlling individual clinicians directly. Matches real-world practice: patients authorize ``my hospital'' rather than enumerating every provider. On-chain audit still records institutional-level grants/revocations.

\subsection{Advanced Cryptographic Enhancements}

\textbf{Proxy Re-Encryption:} Allows patients to delegate re-encryption to semi-trusted proxy, enabling efficient re-sharing without patient remaining online or re-encrypting. Introduces operational complexity but could improve usability for frequent re-sharing.

\textbf{Attribute-Based Encryption:} Encodes access policies directly in ciphertexts, automatically enforcing conditions like ``any cardiologist in my hospital.'' Introduces computational overhead and complex key management.

\textbf{Zero-Knowledge Proofs:} Enables proving permission validity without revealing access control policy. Provider could prove they have valid access without disclosing which permission grant they use, obscuring metadata about sharing patterns.

\subsection{Cross-Chain Interoperability}

Healthcare is global involving diverse stakeholders potentially operating on different blockchains. Cross-chain interoperability protocols could enable permission grants on one blockchain recognized on another, or aggregate records across multiple blockchains.

Polkadot~\cite{wood2016} provides heterogeneous multi-chain framework through relay chains and parachains. Cosmos offers Inter-Blockchain Communication for sovereign blockchain interoperation. Atomic swaps or blockchain bridges could enable interoperability. However, each approach introduces complexity and trust assumptions requiring careful evaluation for sensitive medical data.

\subsection{Formal Verification}

Smart contracts manage safety-critical assets (health data). Bugs have severe consequences. Formal verification tools (Certora, K Framework, Coq) can mathematically prove contracts satisfy specified properties, providing stronger assurance than testing. Future work includes formal verification of our contract against confidentiality, integrity, and authorization properties.

\subsection{Long-Term Data Stewardship}

Digital estate planning mechanisms could allow patients to designate heirs or archival repositories for records. Medical research could benefit from posthumous data donation through advance directives recorded on-chain. Decentralized autonomous organizations (DAOs) could distribute governance across stakeholders with on-chain voting on protocol upgrades.

\section{Limitations}

\textbf{Key Compromise:} If patient's $SK_P$ is stolen, adversary can sign arbitrary authorizations. Multi-signature wallets, hardware wallets, and social recovery reduce risk but add complexity.

\textbf{Revocation Limits:} Revocation prevents future access but cannot retract already-decrypted plaintext—fundamental limitation of cryptographic access control.

\textbf{Storage Availability:} Depends on off-chain infrastructure. Storage provider failure makes records inaccessible until restored. Redundant storage mitigates but increases cost.

\textbf{Transaction Costs:} Gas prices fluctuate. High congestion makes L1 expensive. L2 solutions reduce costs but introduce additional trust assumptions (optimistic rollup fraud proofs, zkRollup trusted setup).

\textbf{Scalability:} Per-recipient key wrapping doesn't scale to very large recipient lists (e.g., sharing with 1,000+ researchers). Institutional guardian keys and proxy re-encryption address some cases, but massive-scale sharing requires additional techniques.

\textbf{Metadata Privacy:} On-chain transparency reveals sharing patterns. While not exposing PHI, metadata can enable inference attacks (frequency analysis, timing correlation). Complete metadata privacy contradicts auditability—fundamental trade-off requiring deployment-specific balance.

\textbf{Denial of Service:} While fees deter spam, coordinated attacks or network congestion can delay critical healthcare operations. Multi-chain deployment with automatic failover provides resilience.

\section{Conclusion}

We presented a patient-centric blockchain architecture for health record management that cryptographically enforces access control while maintaining practical performance. By separating encrypted storage from on-chain authorization and deploying one contract per patient, we achieve confidentiality against curious storage providers, tamper-evident audit trails, and true patient sovereignty over medical data.

Our Ethereum implementation demonstrates feasibility with reasonable gas costs (78,000 gas per permission grant on L1, 6,000 gas on L2 though DA charges dominate actual costs) and acceptable latency (0.7--1.4 s mean end-to-end for 1 MB records on S3 and IPFS respectively). Security analysis establishes that standard cryptographic primitives composed correctly provide desired properties. The system provides complete authorization history (who was granted access), not complete access history (who actually retrieved/viewed records) unless optional read receipts via \texttt{logAccess} are used.

Critical implementation details—explicit nonce management for parallel grants, owner key retrieval paths, metadata privacy through minimal associated data, and comprehensive update/emergency access patterns—ensure the system is deployable in real clinical settings. Integration with FHIR standards and compliance mapping to HIPAA/GDPR requirements demonstrate regulatory viability.

The architecture addresses key challenges in healthcare data management: eliminates trusted intermediaries, provides cryptographic rather than policy-based access control, creates immutable audit trails, and restores patient agency over sensitive medical information. While not solving all healthcare IT problems (key management, emergency access, metadata privacy require ongoing research), this work establishes a foundation for truly patient-controlled health information exchange.

Future work includes formal verification of smart contracts, enhanced privacy through zero-knowledge proofs, cross-chain interoperability for global health data networks, and user studies evaluating real-world adoption barriers. The path forward requires collaboration among cryptographers, healthcare informaticists, policymakers, and patient advocates to realize the vision of patient-empowered, secure, and interoperable health data infrastructure.

\section*{Acknowledgments}

This work was supported by the Department of Computer Science \& Engineering at Bangladesh University of Engineering \& Technology. We thank the Ethereum Foundation and IPFS community for open-source tools enabling this research.

\bibliographystyle{IEEEtran}

\end{document}